\documentclass[letterpaper, 10 pt, conference]{ieeeconf}  

\usepackage{amsmath}
\usepackage{amssymb}
\usepackage{amsfonts}
\usepackage{cite} 
\usepackage{graphicx}
\usepackage{siunitx}
\usepackage{subcaption}
\usepackage{xcolor}
\usepackage{bm}
\usepackage{booktabs}
\usepackage{float}

\newtheorem{theorem}{Theorem}

\newtheorem{proposition}[theorem]{Proposition}

\IEEEoverridecommandlockouts                              

\title{\LARGE \bf
Safe Learning-Based Adaptive Augmentation Control for Fixed-Wing UAV under Uncertainty
}

\author{Leon Raguse$^{1}$,  Lennart Kracke$^{1}$, Mayank Shekhar Jha$^{2}$, Johannes Autenrieb$^{1}$, and Mark Spiller$^{1}$
\thanks{$^{1}$Leon Raguse,  Lennart Kracke, Johannes Autenrieb, and Mark Spiller are with the
	German Aerospace Center (DLR), Institute of Flight Systems, 38108 Braunschweig, Germany. email:  {\tt\small \{leon.raguse, lennart.kracke, johannes.autenrieb,
	mark.spiller\}@dlr.de}}%
\thanks{$^{2}$Mayank Shekhar Jha is with CRAN, CNRS, Universite de Lorraine, France. email: 
{\tt\small mayank-shekhar.jha@univ-lorraine.fr}
	}%
\thanks{This work has been submitted to the IEEE for possible publication.
Copyright may be transferred without notice, after which this version may no longer be accessible.}
}

\begin{document}

\maketitle
\thispagestyle{empty}
\pagestyle{empty}

\begin{abstract}
This paper presents a learning-based adaptive augmentation control concept inspired by the adaptation mechanisms of conventional adaptive control, while not being restricted to their specific parametric adaptation structures. In contrast to augmenting a reinforcement learning (RL) baseline controller with classical adaptive control to account for the simulation-to-reality gap, the proposed approach uses RL-based adaptive augmentation to address the limitations of conventional adaptive control. Domain randomization combined with observation stacking is employed to train the RL-based augmentation to compensate for matched uncertainties in a fixed-wing aircraft system. To ensure constraint satisfaction during operation, a safety filter is incorporated into the control architecture. Based on the concept of pseudo control hedging (PCH), we propose a modified reference model that avoids undesirable interactions between the RL-based augmentation and the safety filter. To reduce the conservatism of the safety filter, we additionally incorporate a disturbance observer. The proposed approach is evaluated on a fixed-wing aircraft model subject to uncertainties.

\end{abstract}


\section{Introduction}
\label{sec:introdcution}
Driven by advances in deep learning and computational capabilities, reinforcement learning (RL) can be applied to increasingly complex control problems. Through interaction with the environment, RL methods can address nonlinear aircraft dynamics and offer the potential to improve control performance and robustness under model uncertainties and varying operating conditions. This potential has motivated increasing research interest in RL-based control of fixed-wing aircraft \cite{richter2024review,marquis2026adversarial}.

Since aircraft control inputs are continuous, RL-based flight control studies predominantly employ algorithms designed for continuous action spaces. The use of deep neural networks further enables these methods to represent complex nonlinear control policies and value functions. Accordingly, deep actor–critic methods such as DDPG \cite{de2023deep,shukla2024reinforcement}, PPO \cite{chowdhury2024interchangeable,bohn2019deep,marquis2026adversarial,chowdhury2024unified}, and SAC \cite{dally2022soft,bohn2023data} are particularly prevalent.

A different class of approaches employs approximate dynamic programming (ADP) for online adaptive flight control. These methods use Bellman-based optimization to update a control policy from an online identified local model, often with a prescribed quadratic value function structure. This enables online adaptation with relatively low computational and data requirements. However, the prescribed value function structure limits the policy representation, while the adaptation depends on the quality of the online system identification \cite{dias2019intelligent,konatala2024flight,konatala2021reinforcement}.

Domain randomization is widely used to expose policies to variations in system and environmental parameters, thereby promoting robustness to model uncertainty and facilitating transfer from simulation to the real system \cite{wada2022sim}. However, a memoryless domain-randomized policy can potentially become conservative because it must perform across the entire randomized domain, whereas a history-based policy can infer the current domain from recent observations and adapt its behavior accordingly \cite{chen2022understanding,margolis2024rapid}. Such temporal information can be provided through observation stacking or recurrent architectures such as long short-term memories (LSTMs), both of which have been applied to RL-based fixed-wing flight control \cite{chowdhury2024unified,chowdhury2024interchangeable,bohn2023data}. Accommodating a broad range of dynamic conditions, however, increases the exploration and training burden, as the policy must learn effective behavior across substantially different system dynamics.

An alternative to domain randomization is to combine RL with conventional adaptive control. The RL policy defines a nominal closed-loop reference model, while the adaptive controller compensates for discrepancies between the nominal and actual dynamics online \cite{Annaswamy2023RLACTAC,Borghesi2026MRARL,Guha2021MRACRL,KannanACCMRACRL,Cheng2022L1RL}. However, when a nominal model is available, using RL to learn the nominal control policy adds an unnecessary learning problem, since a stabilizing model-based controller can be designed directly. Moreover, the achievable compensation is constrained by the assumptions underlying conventional adaptive control laws such as model reference adaptive control (MRAC) and $\mathcal{L}_1$ adaptive control, for which guarantees typically rely on constant or slowly varying uncertainty parameters, limiting their applicability to strongly time-varying aircraft dynamics \cite{goel2024composite}.

Control barrier function (CBF)-based safety filters allow to enforce safety during RL exploration by modifying potentially unsafe inputs \cite{cheng2019end}. However, safety-filtered inputs introduce an interaction with RL, which must account for the discrepancy between the proposed and applied inputs during learning \cite{cheng2019end,emam2022safe}. Moreover, projection-based filters can map different policy inputs to the same safety-filtered input, reducing distinguishability during learning and potentially degrading policy optimization \cite{Markgraf2026}. Approaches have been proposed to mitigate this interaction by learning previous filter corrections \cite{cheng2019end} or differentiating through the safety layer \cite{emam2022safe}. However, projection-based filters can still introduce information loss and affect policy learning \cite{Markgraf2026}. Alternatively, safety constraints can be incorporated directly into the RL objective through barrier or constraint penalties \cite{marvi2021safe,zhao2023stable}, avoiding modification of the applied input but coupling learning to the imposed constraints.

In this work, we propose an approach that retains the model-based controller as the nominal stabilization law and uses RL exclusively for adaptive augmentation. This avoids RL exploration for baseline controller synthesis while allowing the learned augmentation to capture more complex and strongly time-varying uncertainties without imposing a predefined parametric adaptation structure. Domain randomization is used during training to expose the augmentation policy to the expected range of uncertainties, while observation stacking provides temporal information to distinguish their evolving effects.

Furthermore, we propose a fundamentally new solution to the interaction problem between RL and the safety filter. We make use of the pseudo control hedging (PCH) concept \cite{johnson2000pseudo}, originally designed to account for actuator saturation effects. Instead, we apply PCH to ensure that the RL-based adaptive augmentation does not compensate for interventions introduced by the safety filter. Specifically, this is achieved by modifying the reference model such that the resulting model-matching error dynamics are invariant to the input discrepancy introduced by the safety filter. Consequently, the RL-based augmentation is decoupled from the safety filter when minimizing the model matching error.

\section{Background and Problem Formulation}
\label{sec:background_problem_formulation}
Consider the nonlinear control-affine system 
\begin{equation}\label{eq:nonlinear_control_affine_system}
    \dot x = f(x) + g(x) u 
\end{equation}
with state $x \in \mathcal{X} \subset \mathbb{R}^n$, control input $u \in \mathcal{U} \subset \mathbb{R}^m$ and locally Lipschitz continuous functions $f: \mathcal{X} \rightarrow \mathbb{R}^n$ and $g: \mathcal{X} \rightarrow \mathbb{R}^{n \times m}$.

Assume a 0-super-level set $\mathcal{C} = \{x \in \mathcal{X}: h(x) \geq 0 \}\subset \mathcal{X}$ defined by a $r$-th order continuously differentiable function $h: \mathcal{X} \rightarrow \mathbb{R}$. Consider
\begin{align*}
 \psi_0(x)&=h(x),\nonumber\\
 \psi_i(x)&=
 \dot \psi_{i-1}(x)+\alpha_i(\psi_{i-1}(x)),
             \quad i=1,\ldots,r-1,\nonumber\\
 \psi_r(x,u)&
 =
\dot\psi_{r-1}(x,u)
              +\alpha_r(\psi_{r-1}(x)),
\end{align*}
where $\alpha_i$ are differential class $\mathcal K$ functions. Further, assume the associated sets 
 \(
 \mathcal C_i=\{x\in\mathcal X:\psi_{i-1}(x)\geq0\}
 \),
 \(
 \mathcal C_H=\bigcap_{i=1}^{r}\mathcal C_i
 \). The function $h$ is a higher order control barrier function (HOCBF), if
 \begin{equation}
 \sup_{u\in\mathcal U}\psi_r(x,u)\geq0
 \label{eq:mj-hocbf-feasibility}
\end{equation}
for all $x\in\mathcal C_H$, rendering the set $\mathcal C_H$ forward invariant \cite{xiaoControlBarrierFunctions2019}.
Let $\psi_{j,r_j}(x,u)\geq0$ denote the condition (\ref{eq:mj-hocbf-feasibility}) for a HOCBF $h_j$ with relative degree $r_j$. 
For a possibly unsafe nominal controller $u_{\mathrm{nom}}$ the following quadratic program (QP)  
 \begin{align}
 u(x)&=\mathop{\arg\min}_{u\in\mathcal U}
       \frac12\|u-u_{\rm nom}(x)\|^2
        \label{eq:mj-joint-cbf-qp}\\
 &\text{subject to}\quad
       \psi_{j,r_j}(x,u)\geq0,
       \quad
       j=1,\dots,N_{\mathrm{CBF}},\notag
 \end{align}
can be considered to find a minimum invasive control input $u$ satisfying the condition (\ref{eq:mj-hocbf-feasibility}) for $j=1,\dots,N_{\mathrm{CBF}}$ HOCBFs.

In practical applications, the nominal system (\ref{eq:nonlinear_control_affine_system}) rarely captures the true dynamics. Instead, model uncertainties are present, such that the true dynamics are given by
\begin{align}
    \dot x = f(x) + \Delta f(x) + (g(x) + \Delta g(x)) u,
    \label{eq:true_nonlinear_control_affine_system}
\end{align}
where $\Delta f(x)$ and $\Delta g(x)$ are additive and multiplicative uncertainties, respectively. The uncertainties are assumed to be matched, i.e., their effect lies within the range space of the input matrix \(g(x)\). We formulate the lumped uncertainty $\Delta(x,u) = \Delta f(x) + \Delta g(x) u$ and assume the finite bounds
\begin{align}
    \lVert \Delta(x,u) \rVert \leq \Delta_{\mathrm{max}} \quad \lVert \dot\Delta(x,u) \rVert \leq \dot\Delta_{\mathrm{max}} .
    \label{eq:_bounded_uncertainty_lumped}
\end{align}
\begin{figure*}[t]
    \centering
    \includegraphics[width=0.99\textwidth]{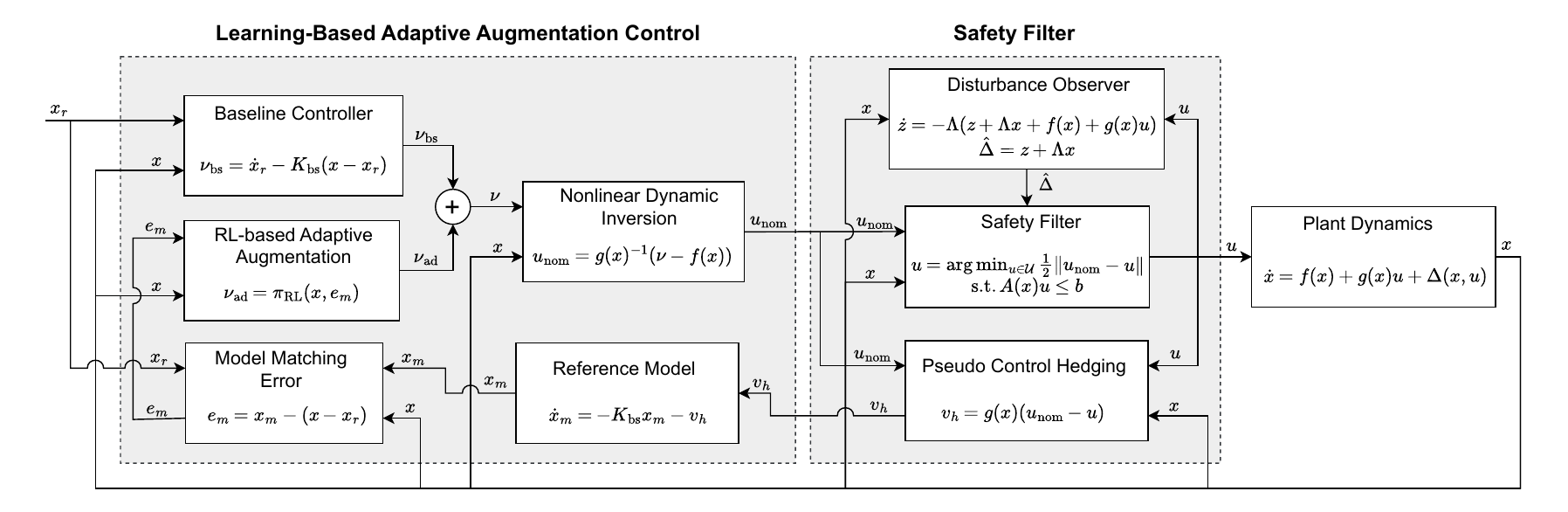}
    \caption{Proposed RL-Based Adaptive Augmentation Control Scheme with Safety Filter}
    \label{fig:block_diagram_overall_concept}
\end{figure*}
For system (\ref{eq:true_nonlinear_control_affine_system}) with lumped uncertainty  $\Delta(x,u)$, we propose the safe RL-based adaptive augmentation control scheme, as illustrated in Fig. \ref{fig:block_diagram_overall_concept}.  Based on the nominal system  (\ref{eq:nonlinear_control_affine_system}), we apply nonlinear dynamic inversion (NDI) and define as pseudo control input $\nu=\nu_{\mathrm{bs}}+\nu_{\mathrm{ad}}$, where $\nu_{\mathrm{bs}}=\dot x_r -K_{\mathrm{bs}}e$ is the baseline controller with feedback matrix $K_{\mathrm{bs}}$,  tracking error $e=x-x_r$ and reference state $x_r$  and $\nu_{\mathrm{ad}}$ is a RL-based adaptive augmentation. 

It follows from Fig. (\ref{fig:block_diagram_overall_concept}) that the closed-loop error dynamics are $\dot e= -K_{\mathrm{bs}}e+\nu_{\mathrm{ad}}+\Delta-v_h$, which are disturbed by the safety filter, i.e., the hedging signal $v_h=g(x)(u_\mathrm{nom}-u)$, that overwrites the nominal inputs $u_{\mathrm{nom}}$ when safety constraints are required to be enforced and by the lumped uncertainty $\Delta(x,u)$.  We use the RL-based augmentation to compensate 
for the uncertainties $\Delta(x,u)$ in the error dynamics. Therefore, we define the reference model based on the desired tracking error dynamics
$\dot x_m =-K_{\mathrm{bs}}x_m-v_h$ and based on that the model matching error $e_m=x_m-(x-x_r)$. Following the design idea of MRAC, we apply the RL-based adaptive augmentation to primarily drive $e_m$ to zero to achieve the reference dynamics. By feeding back the hedging signal $v_h$ into the reference model, the model matching error dynamics  $\dot e_m=-K_{\mathrm{bs}}e_m-\nu_{\mathrm{ad}}-\Delta$ become invariant from the safety filter action. Therefore, the proposed learning-based control scheme provides two main benefits:
\begin{itemize}
    \item Instead of learning the true dynamics, we formulate a RL-based augmentation that aims for learning an adaption law to compensate matched uncertainties similar to MRAC.
    \item We formulate PCH for safety filters. We hide the safety filter action from the model matching error dynamics which allows to train the learning-based control scheme independently of the safety filter. 
\end{itemize}
For the safety filter, we additionally employ a disturbance observer to reduce conservatism with respect to system uncertainties, following the concept of \cite{dacs2022robust}. Based on assumption (\ref{eq:_bounded_uncertainty_lumped}), finite bounds for the disturbance estimation error $\tilde \Delta(x,u) =\Delta(x,u)- \hat \Delta(x,u)$ can be stated.

\section{Aircraft Dynamics}
\label{sec:aircarft_modeling}
The rotational dynamics of the fixed-wing UAV are 
\begin{subequations}
\label{num:rigid_body_dynamics}
	\begin{align}
		\dot \Theta
		&=
		\Psi\omega,\label{num:rigid_body_dynamics_a}\\
		\dot \omega &= 
		-J^{-1}(\omega \times J\omega)
		+
		J^{-1}M,
	\end{align}
\end{subequations}
with the states being: 
the rates around the axis of the body-fixed frame 
\( 
\omega=
\begin{bmatrix}
	p & q & r
\end{bmatrix}^{\top}
\in \mathbb{R}^3
\) 
and the Euler angles
\( 
\Theta=
\begin{bmatrix}
	\phi & \theta & \psi
\end{bmatrix}^{\top}
\in \mathbb{R}^3
\). The inertia tensor is given by $J\in\mathbb{R}^{3 \times 3}$ and the Euler angle rate transformation matrix is denoted by $\Psi\triangleq\Psi(\phi,\theta)\in\mathbb{R}^{3 \times 3}$.

The dynamic pressure is calculated as
\(
Q = 
0.5 \rho_{\mathrm{air}}(h_{\mathrm{alt}}) v_{\mathrm{TAS}}^2 \notag
\), where $\rho_{\mathrm{air}}(h_{\mathrm{alt}})\in\mathbb{R}_{>0}$ is the air density depending on the alitude $h_{\mathrm{alt}}\in\mathbb{R}_{>0}$ and $v_{\mathrm{TAS}}\in\mathbb{R}_{>0}$
is the true airspeed. The matrix $T_{fa}\triangleq T_{fa}(\alpha,\beta)$
is the rotational matrix from the aerodynamic to the body-fixed frame, where $\alpha\in\mathbb{R}$ 
is the angle of attack and
$\beta\in\mathbb{R}$ is the sideslip angle. We denote $S_{\mathrm{ref}}, C_{\mathrm{ref}}, B_{\mathrm{ref}}\in\mathbb{R}_{>0}$ as the reference wing area, the reference chord length and the reference span, respectively, and we define normalized rates as 
\(
\bar p 
= \frac{B_{\mathrm{ref}}}{2 v_{\mathrm{TAS}}}p
\),
\( \bar q = \frac{C_{\mathrm{ref}}}{2 v_{\mathrm{TAS}}}q\), 
\(\bar r = 
	\frac{B_{\mathrm{ref}}}{2 v_{\mathrm{TAS}}}r\).
Additionally, we define
$\xi, \eta, \zeta\in\mathbb{R}$ as the aileron, elevator, and rudder deflections, respectively, and   
\(
\delta
=
\begin{bmatrix}
	\xi & \eta & \zeta
\end{bmatrix}^{\top}
\) as the vector of control surface deflections.

For the moments, we assume the propulsive force to be aligned with the aircraft's centerline. Based on nominal system knowledge, we calculate the nominal moments as
\begin{align}
	M_0 = 
	\Xi(\chi,\omega)
	+
	\Pi(\chi,\omega)
	\delta
	\label{num:aero_moments_explict_expression_nom}	
\end{align}
with
\begin{align}
	\Xi(\chi,\omega)
	&= 
	Q S_{\mathrm{ref}} T_{fa}
	\begin{bmatrix}
		B_{\mathrm{ref}}\bar C_l\\
		C_{\mathrm{ref}}\bar C_m\\
		B_{\mathrm{ref}}\bar C_{n}
	\end{bmatrix},	\notag\\
	\Pi(\chi,\omega)
	&= Q S_{\mathrm{ref}} T_{fa}
		\begin{bmatrix}
			B_{\mathrm{ref}}
			C_{l,\xi} & B_{\mathrm{ref}}
			C_{l,\eta}  & 
			B_{\mathrm{ref}}
			C_{l,\zeta} \\
			C_{\mathrm{ref}}
			C_{m,\xi} & 
			C_{\mathrm{ref}}
			C_{m,\eta} & 
			C_{\mathrm{ref}}
			C_{m,\zeta}\\
			B_{\mathrm{ref}}
			C_{n,\xi} & 
			B_{\mathrm{ref}}
			C_{n,\eta} & 
			B_{\mathrm{ref}}
			C_{n,\zeta} 
		\end{bmatrix}\notag,
\end{align}
where $\chi$ denotes the dependency on the states 
\(
\chi
=
\begin{bmatrix}
\alpha & \beta & v_{\mathrm{TAS}} & h_{\mathrm{alt}}
\end{bmatrix}^{\top}
\).
The aerodynamic moment coefficients of roll, pitch and yaw are defined by functions
$\bar C_l = f_l(\alpha,\beta,\bar p, \bar q, \bar r)$, $\bar C_m = f_m(\alpha,\beta,\bar p, \bar q, \bar r)$, $ \bar C_n = f_n(\alpha,\beta,\bar p, \bar q, \bar r)$ respectively, the effects of the control inputs are described by the linear control derivatives $C_{l,\xi}, C_{l,\eta}, C_{l,\zeta}, C_{m,\xi}, C_{m,\eta}, C_{m,\zeta}, C_{n,\xi}, C_{n,\eta}, C_{n,\zeta}\in\mathbb{R}$.

As the nominal aerodynamic model typically contains significant uncertainty, we define the following uncertainties $\Delta \Xi\triangleq\Delta \Xi(\Delta \bar C_\star) \in\mathbb{R}^3$ and $\Delta \Pi\triangleq\Delta \Pi(\Delta \bar C_{\star,\xi}, \Delta \bar C_{\star,\eta}, \Delta \bar C_{\star,\zeta}) \in\mathbb{R}^{3 \times 3}$ induced by the aerodynamics parametric perturbations $\Delta \bar C_\star, \Delta \bar C_{\star,\xi}, \Delta \bar C_{\star,\eta}, \Delta \bar C_{\star,\zeta}$ with $\star\in\{l,m,n\}$. Consequently, the description of the true moments becomes
\begin{align}
	M = 
	\Xi(\chi,\omega)
	+
	\Pi(\chi,\omega)
	\delta
	+
	\Delta \Xi(\chi,\omega)
    +
	\Delta \Pi(\chi,\omega)
	\delta
	\label{num:aero_moments_explict_expression}.
\end{align}
Let 
\(
x
=
\begin{bmatrix}
\Theta &
\omega &
\chi
\end{bmatrix}^{\top}\in\mathcal{X}\subseteq\mathbb{R}^{10}
\)
and
\(
\delta\in\mathcal{U}\subseteq\mathbb{R}^3
\), then substituting (\ref{num:aero_moments_explict_expression}) in (\ref{num:rigid_body_dynamics}) yields the rotational dynamics
\begin{subequations}
\label{num:rigid_body_dynamics_final}
\begin{align}
\dot \Theta
&=
\Psi\omega,\\
\dot{\omega}
&=
f_{\omega}(x)
+
\Delta f_{\omega}(x)
+
(
g_{\omega}(x)
+
\Delta g_{\omega}(x)
)\delta 
\label{num:rot_dyanmics_perturbated_underbrace}
\end{align}
\end{subequations}
with nominal models $f_{\omega}(x)=-J^{-1}(\omega \times J\omega)
+ J^{-1}\Xi(\chi,\omega)$, $g_{\omega}(x)=J^{-1}\Pi(\chi,\omega)$, additive uncertainty $\Delta f_\omega(x)=J^{-1}\Delta\Xi(\chi,\omega)$ and multiplicative uncertainty $\Delta g_\omega(x)=J^{-1}\Delta\Pi(\chi,\omega)$. 

We assume the nominal control-effectiveness matrix $g_\omega(x)$ to be nonsingular in the considered flight envelope $\mathcal{X}$, implying matched uncertainties. For the lumped uncertainty $\Delta(x,\delta)=\Delta f_\omega(x) + \Delta g_\omega(x)\delta$  we assume
\begin{align}
    \| \Delta(x,\delta) \| \leq \Delta_{\mathrm{max}} \quad \| \dot \Delta(x,\delta)\| \leq \dot\Delta_{\mathrm{max}} .
    \label{eq:_bounded_uncertainty_lumped_aircraft}
\end{align}
As we consider attitude control based on the rotation dynamics (\ref{num:rigid_body_dynamics_final}) only, we also assume that there exists an outer-loop controller that keeps the remaining states in $x$ stable.

\section{Baseline Controller and Reference Model}
\label{sec:baseline_controller}
Consider the nominal control law
\begin{align}
	\delta_{\mathrm{nom}} 
	=
	g_{\omega}(x)^{-1}(\nu 
	-f_{\omega}(x)
	),
	\label{num:NDI_trans_law}
\end{align}
formulated based on the NDI transformation, where $\nu\in\mathbb{R}^3$ denotes the pseudo control input. As we apply the safety filtered input $\delta=\delta_{\mathrm{nom}}-\delta_h$ with $\delta_h = \delta_{\mathrm{nom}}-\delta$ to the system (\ref{num:rot_dyanmics_perturbated_underbrace}) this yields the closed-loop dynamics
\begin{align}
\dot{\omega}
&=
\nu
-
g_{\omega}(x)\delta_h
+
\Delta f_{\omega}(x)
+
\Delta g_{\omega}(x)\delta.
\label{num:true_closed_loop}
\end{align}
Define 
\begin{align}
 g_{\omega}(x)
&\triangleq
\begin{bmatrix}
	 g_{\phi} &
	 g_{\theta} &
	 g_{r}
\end{bmatrix}^{\top},
\label{num:def_elements_g_matrix}\\
\Delta f_{\omega}(x)
&\triangleq
\begin{bmatrix}
	\Delta f_{\phi} &
	\Delta f_{\theta} &
	\Delta f_{r}
\end{bmatrix}^{\top},
\label{num:def_elements_f_uncertainty}\\
\Delta g_{\omega}(x)
&\triangleq
\begin{bmatrix}
	\Delta g_{\phi} &
	\Delta g_{\theta} &
	\Delta g_{r}
\end{bmatrix}^{\top}
\label{num:def_elements_g_uncertainty}
\end{align}
with $\Delta f_{\phi}, \Delta f_{\theta}, \Delta f_{r} \in \mathbb{R}$ and $g_{\phi}, g_{\theta}, g_{r}, \Delta g_{\phi}, \Delta g_{\theta}, \Delta g_{r} \in \mathbb{R}^{1\times3}$, and consider $\nu=\begin{bmatrix}
    \nu_\phi & \nu_\theta & \nu_r
\end{bmatrix}^{\top}$, 
then (\ref{num:true_closed_loop}) can be expressed element-wise as
\begin{align}
\begin{bmatrix}
    \dot p\\
    \dot q\\
    \dot r
\end{bmatrix}
&= 
\begin{bmatrix}
    \nu_\phi\\
    \nu_\theta\\
    \nu_r
\end{bmatrix}
	-
\begin{bmatrix}
    g_{\phi} \delta_h\\
    g_{\theta} \delta_h\\
    g_{r} \delta_h
\end{bmatrix}
	+
\begin{bmatrix}
    \Delta f_{\phi}\\
    \Delta f_{\theta}\\
    \Delta f_{r}
\end{bmatrix}
    +
\begin{bmatrix}
    \Delta g_{\phi} \delta\\
    \Delta g_{\theta} \delta\\
    \Delta g_{r} \delta
\end{bmatrix}.
	\label{num:true_closed_loop_elementwise}	
\end{align}
From (\ref{num:rigid_body_dynamics_a}), the kinematics 
\begin{subequations}\label{num:kinematics}
    \begin{align}
	\ddot\phi	
	&=
	\dot p
	+ \dot q\sin(\phi)\tan(\theta)
	+ \dot r\cos(\phi)\tan(\theta) \notag\\
	& \quad
	+ q\dot \phi\cos(\phi)\tan(\theta)
	+ q\dot \theta\sin(\phi)\sec^2(\theta) \notag\\
	& \quad
	- r\dot\phi\sin(\phi)\tan(\theta)
	+ r\dot\theta\cos(\phi)\sec^2(\theta),
	\label{num:kinematics_phi}\\
	\ddot\theta
	&=
	\dot q\cos(\phi)
	-q\dot\phi\sin(\phi)
	-\dot r\sin(\phi)
	-r\dot \phi\cos(\phi)
	\label{num:kinematics_theta}
	\end{align}
\end{subequations}
are derived. Let the pseudo control inputs be
\begin{subequations}\label{num:pseudo_control_inputs}
\begin{align}
	\nu_{\phi}
	&=
	- \dot q\sin(\phi)\tan(\theta)
	- \dot r\cos(\phi)\tan(\theta)\notag\\
	& \quad
	- q\dot \phi\cos(\phi)\tan(\theta)
	- q\dot \theta \sin(\phi)\sec^2(\theta) \notag\\
	& \quad
	+r\dot\phi\sin(\phi)\tan(\theta)
	- r\dot\theta\cos(\phi)\sec^2(\theta)\notag\\
	& \quad
	+\ddot\phi_r +\mu_{\phi},
	\label{num:nu_phi}\\
	\nu_{\theta}
	&=
	\frac{q\dot\phi\sin(\phi)
	+\dot r\sin(\phi)
	+r\dot \phi\cos(\phi)
	+\ddot\theta_r +\mu_{\theta}}{\cos(\phi)},
	\label{num:nu_theta}\\
	\nu_{r}
	&=\dot r_r +\mu_{r},
	\label{num:nu_r}
\end{align}
\end{subequations}
where $\mu_{\phi}, \mu_{\theta}, \mu_{r}\in \mathbb{R}$ are yet undefined auxiliary control inputs and $\phi_r$, $\theta_r$, $r_r$ are reference signals of the control variables $\phi$, $\theta$, $r$. Substituting the pseudo control inputs (\ref{num:pseudo_control_inputs}) into (\ref{num:true_closed_loop_elementwise}) and then substituting $\dot p$ and $\dot r$ of (\ref{num:true_closed_loop_elementwise}) into (\ref{num:kinematics}) yields
\begin{align}
\begin{bmatrix}
    \ddot \phi\\
    \ddot \theta\\
    \dot r
\end{bmatrix}
&= 
\begin{bmatrix}
    \ddot \phi_r + \mu_\phi 
    + \Delta f_{\phi} + \Delta g_{\phi} \delta
    - g_{\phi} \delta_h\\
    \ddot \theta_r + \mu_\theta 
    +
    \cos(\phi)(\Delta f_{\theta} + \Delta g_{\theta} \delta
    - g_{\theta} \delta_h) \\
    \dot r_r +\mu_{r} 
    + \Delta f_{r} + \Delta g_{r} \delta
    -g_{r} \delta_h
\end{bmatrix}.
\label{num:error_dynamics_baseline_minus1}
\end{align}
By defining the tracking errors $e_\phi=\phi-\phi_r$, $e_\theta=\theta-\theta_r$, $e_r=r-r_r$ and rearranging (\ref{num:error_dynamics_baseline_minus1}) accordingly, we obtain the error dynamics 
\begin{subequations}\label{num:error_dynamics_systems}
\begin{align}
\dot x_\phi
&= 
A_\phi x_\phi
+
B_\phi
(
\mu_{\phi} 
+ \Delta f_{\phi} + \Delta g_{\phi}\delta
-g_{\phi} \delta_h
),\\
\dot x_\theta
&= 
A_\theta x_\theta
+
B_\theta 
(\mu_{\theta}  \notag\\
&  \qquad \qquad 
+
\cos(\phi)
(
\Delta f_{\theta} 
+ \Delta g_{\theta}  \delta 
-g_{\theta}  \delta_h
)
),\\
\dot x_r
&= 
A_r x_r
+
B_r
(    
\mu_{r}
+ \Delta f_{r} + \Delta g_{r} \delta
-g_{r} \delta_h)
\end{align}
\end{subequations}
with
\begin{align}
A_i =
\begin{bmatrix}
    0 & 1 \\
    0 & 0
\end{bmatrix},
\quad
B_i
=
\begin{bmatrix}
    0  \\
    1
\end{bmatrix},
\quad
x_i
=
\begin{bmatrix}
    e_i  \\
    \dot e_i
\end{bmatrix}
\notag
\end{align}
for $i\in\{\phi,\theta\}$ and $A_r=0$, $B_r=1$, $x_r=e_r$. 
We decompose the auxiliary control inputs as $\mu = \mu_\mathrm{bs}+\mu_\mathrm{ad}$, where 
\(
\mu_\mathrm{bs}
=
\begin{bmatrix}
     \mu_{\phi,\mathrm{bs}} &
     \mu_{\theta,\mathrm{bs}} &
     \mu_{r,\mathrm{bs}}  
\end{bmatrix}^{\top} 
\)
denotes the baseline error controller and 
\(
\mu_\mathrm{ad} 
=
\begin{bmatrix}
      \mu_{\phi,\mathrm{ad}}&
      \mu_{\theta,\mathrm{ad}}&
      \mu_{r,\mathrm{ad}}
\end{bmatrix}^{\top}
\)
the adaptive augmentation. Consider $i\in\{\phi,\theta, r\}$, we chose linear baseline controllers 
\(
\mu_{i,\mathrm{bs}} = K_i x_i,
\)
and for the dynamics  
\begin{align}
\dot x_{i,m}
=
A_{i,m} x_{i,m} -B_i v_{i,h},
\label{num:desired_closed_loop_ref_system}
\end{align}
with hedging signals $v_{\phi,h}=g_\phi\delta_h$, $v_{\theta,h}=g_\theta \cos(\phi)\delta_h$, $v_{r,h}=g_r\delta_h$, we assume $K_i$ to be chosen so that $A_{i,m}=A_i+B_iK_i$ is Hurwitz. We refer to (\ref{num:desired_closed_loop_ref_system}) as the reference model describing the desired tracking error dynamics. However, by substituting the considered baseline controllers into (\ref{num:error_dynamics_systems}), the true error dynamics are
\begin{subequations}\label{num:error_dynamics_systems_with_baseline}
\begin{align}
\dot x_\phi
&= 
A_{\phi,m} x_\phi
+
B_\phi
( \mu_{\phi,\mathrm{ad}} 
+ \Delta f_{\phi} + \Delta g_{\phi} \delta
-g_{\phi} \delta_h
),\\
\dot x_\theta
&= 
A_{\theta,m} x_\theta
+
B_\theta 
(\mu_{\theta,\mathrm{ad}} \notag\\
&  \qquad \qquad 
+
\cos(\phi) 
(
\Delta f_{\theta} 
+ \Delta g_{\theta}  \delta 
-g_{\theta} \delta_h
)
),\\
\dot x_r
&= 
A_{r,m} x_r
+
B_r
( \mu_{r,\mathrm{ad}} 
+ \Delta f_{r} + \Delta g_{r} \delta
-g_{r} \delta_h).
\end{align}
\end{subequations}
Based on the the reference model (\ref{num:desired_closed_loop_ref_system}) and the true error dynamics (\ref{num:error_dynamics_systems_with_baseline}), we define the model matching errors $e_{i,m}=x_i-x_{i,m}$ for $i\in\{\phi,\theta, r\}$ with dynamics
\begin{subequations}\label{num:matching_error_dynamics}
\begin{align}
\dot e_{\phi,m}
&= 
A_{\phi,m} e_{\phi,m}
+
B_\phi
( \mu_{\phi,\mathrm{ad}} 
+ \Delta f_{\phi} + \Delta g_{\phi} \delta
),\\
\dot e_{\theta,m}
&= 
A_{\theta,m} e_{\theta,m}
+
B_\theta 
(\mu_{\theta,\mathrm{ad}} \notag\\
&  \qquad \qquad 
+
\cos(\phi) 
(
\Delta f_{\theta} 
+ \Delta g_{\theta}  \delta 
)
)
,\\
\dot e_{r,m}
&= 
A_{r,m} e_{r,m}
+
B_r
( \mu_{r,\mathrm{ad}} 
+ \Delta f_{r} + \Delta g_{r} \delta).
\end{align}
\end{subequations}

The main goal of the adaptive augmentation is to drive $e_{i,m}$ as close as possible to zero by suitable choice of the remaining inputs $\mu_{i,\mathrm{ad}}$ with $i\in\{\phi,\theta, r\}$. Note, by applying PCH in the reference model (\ref{num:desired_closed_loop_ref_system}), the matching error dynamics (\ref{num:matching_error_dynamics}) become invariant of the safety filter action $\delta_h$. Therefore, by driving $e_{i,m}$ to zero the RL-agent will aim to compensate for the matched uncertainties $\Delta f_i$ and $\Delta g_i$ but not for the safety filter action.

To augment the baseline controller in a manner analogous to MRAC, we feedback the Lyapunov-weighted error
\(
s_i = e_{i,m}^{\top} P_i B_i
\)
to the RL-agent with $i\in\{\phi,\theta, r\}$, 
where $P_i$ is obtained from the Lyapunov equation
\(
A_{i,m}^{\top} P_i + P_i A_{i,m} = -Q
\)
with $Q \succ 0$.

\section{RL-Based Adaptive Augmentation}
\label{sec:adaptive_augmentation}
In this section, we develop a novel RL-based adaptive augmentation that is inspired by the uncertainty compensating
structure of MRAC. The objective is to enable the augmentation to learn to compensate for matched uncertainties in the matching error dynamics \eqref{num:matching_error_dynamics}. Specifically, the RL policy learns an adaptation rule for updating the estimates of the uncertainty parameters rather than directly commanding control surface deflections.

For the uncertainty terms defined in \eqref{num:def_elements_f_uncertainty} and \eqref{num:def_elements_g_uncertainty}, consider the corresponding estimates $\Delta\hat f_\omega\in\mathbb R^3$, $\Delta\hat g_\omega\in\mathbb R^{3\times3}$ and $\Delta\hat f_i\in\mathbb R$, $\Delta\hat g_i\in\mathbb R^{1\times3}$ for $i\in\{\phi,\theta,r\}$. 

The policy is evaluated at $t_k=kT_s$, $k\in\mathbb N_0$, with
sampling period $T_s>0$. For continuous-time signals, the subscript \(k\) denotes evaluation at \(t_k\). Control-dependent observations refer to the values available at the time of the policy call, before the new action is issued. The aircraft, baseline controller and reference model dynamics remain
continuous-time.

The matched uncertainties are not directly measured, so the
sampled learning problem is treated as a partially observable
Markov decision process (POMDP)~\cite{KAELBLING199899}. To supply
temporal information without explicitly estimating a belief state,
the policy uses the current observation and three preceding
observations.
Defining the instantaneous observation, without its history, as
\begin{equation}
 \begin{aligned}
 o_k=\operatorname{col}\big(&o_{\phi,k},o_{\theta,k},o_{r,k},\delta_k,\\
       &\|\Delta\hat f_{\omega,k}\|_2,
        \|\Delta\hat g_{\omega,k}\|_F,o_{x,k}\big),
 \end{aligned}
 \label{eq:rl_obs}
\end{equation}
where
\begin{align*}
 o_{i,k}=\operatorname{col}\big(&e_{i,m,k},s_{i,k},\mu_{i,k},
       \mu_{i,\mathrm{ad},k},\\
       &\Delta\hat f_{i,k},\Delta\hat g_{i,k}^{\top}\big),\\
 o_{x,k}=\operatorname{col}\big(&\phi_k,\theta_k,p_k,q_k,r_k,
             \alpha_k,\beta_k,\gamma_k,v_{\mathrm{TAS},k}\big)
\end{align*}
with $i\in\{\phi,\theta,r\}$. Here $s_{i,k}=e_{i,m,k}^{\top}P_iB_i$ is the sampled
Lyapunov-weighted error already defined in Section~\ref{sec:baseline_controller}.
Both the sampled matching error $e_{i,m,k}$ as well as $s_{i,k}$ quantify the deviation from the reference model. The sampled control inputs $\delta_k$, $\mu_{i,k}$, $\mu_{i,\mathrm{ad},k}$ and the estimated uncertainty parameters $\|\Delta\hat f_{\omega,k}\|_2$, $\|\Delta\hat g_{\omega,k}\|_F$, $\Delta\hat f_{i,k}$, $\Delta\hat g_{i,k}^{\top}$ describe
the current compensation, while $o_{x,k}$ supplies flight-condition
information.

With an initialized history buffer, the input to the policy and the resulting action are given by
\begin{equation}
 \begin{aligned}
  \bar o_k&=\operatorname{col}(o_k,o_{k-1},o_{k-2},o_{k-3}),\\
  a_k&\sim\pi(\cdot\mid\bar o_k).
 \end{aligned}
 \label{eq:rl_stacked_obs}
\end{equation}
The actions $a_k$ are structured as
\begin{equation}
 \begin{aligned}
 a_k=\operatorname{col}\big(&a_{\phi,f,k},a_{\phi,g,k}^{\top},
          a_{\theta,f,k},\\
          &a_{\theta,g,k}^{\top},a_{r,f,k},a_{r,g,k}^{\top}\big)
          \in\mathcal A\subset\mathbb R^{12},
 \end{aligned}
 \label{eq:rl_action}
\end{equation}
where $a_{i,f,k}\in\mathbb R$ and
$a_{i,g,k}\in\mathbb R^{1\times3}$ are later applied as adaptation rates for the uncertainty estimates. The network inputs and outputs are normalized using fixed scaling factors, with the same scaling applied during training and deployment. 

We model the actions with a
zero-order hold,
\begin{equation}
 a(t)=a_k,\qquad t\in[t_k,t_{k+1}),
 \label{eq:mj-policy-hold}
\end{equation}
and for $i\in\{\phi,\theta,r\}$ we retain the continuous-time adaptation 
{\color{black}
\begin{subequations}
 \label{eq:rl_action_mapped2rates}
 \begin{align}
  \Delta\dot{\hat f}_i(t)
   &=\mathrm{Proj}\big(\Delta\hat f_i(t),a_{i,f}(t)\big),\\
  \Delta\dot{\hat g}_i(t)
   &=\mathrm{Proj}\big(\Delta\hat g_i(t),a_{i,g}(t)\big)
 \end{align}
\end{subequations}}
of the estimated parameters. The projection operator $\mathrm{Proj}(\cdot,\cdot)$ is defined as in standard MRAC ~\cite[Ch.~11]{eugene2013robust} and restricts the parameters to a prescribed bounded convex set, avoiding parameter drift. 


The adaptive augmentation is formulated to compensate for the uncertainties in \eqref{num:matching_error_dynamics} as
{\color{black}
\begin{equation}
 \mu_{i,\mathrm{ad}}(t)=
 -\begin{bmatrix}\Delta\hat f_i(t)&\Delta\hat g_i(t)\end{bmatrix}
   \Phi_i(t), 
   \quad i\in\{\phi,\theta,r\}
 \label{eq:mu_ad_law}
\end{equation}}
with regressors
\[
 \begin{aligned}
 \Phi_\phi(t)=\Phi_r(t)&=\operatorname{col}(1,\delta(t)),\\
 \Phi_\theta(t)&=\cos\phi(t)\operatorname{col}(1,\delta(t)).
 \end{aligned}
\]
It is noted that here $\delta$ is the applied, safety-filtered input, as discussed in
 Section~\ref{sec:baseline_controller}, and not the pre-filter command $\delta_{\mathrm{nom}}$.

The complete sampled matching error vector is
\begin{equation}
 e_{m,k}=\operatorname{col}
        (e_{\phi,m,k},e_{\theta,m,k},e_{r,m,k})\in\mathbb R^5,
 \label{eq:mj-matching-vector}
\end{equation}
where $e_{\phi,m,k},e_{\theta,m,k}\in\mathbb R^2$ and
$e_{r,m,k}\in\mathbb R$. Thus
\[
 \|e_{m,k}\|_2^2=\|e_{\phi,m,k}\|_2^2+
                 \|e_{\theta,m,k}\|_2^2+|e_{r,m,k}|^2.
\]
Using the sampled signals and an initialized previous action,
the stage reward is
\begin{align}
 r_k={}&-r_{m,1}\big(1-e^{-r_{m,2}\|e_{m,k}\|_2^2}\big)
          -r_{m,3}\|e_{m,k}\|_2\nonumber\\
       &-r_f\|\Delta\hat f_{\omega,k}\|_2^2
          -r_g\|\Delta\hat g_{\omega,k}\|_F^2\nonumber\\
       &-r_{\mathrm{ad}}\|\mu_{\mathrm{ad},k}\|_2^2
          -r_a\|a_k-a_{k-1}\|_2^2,
 \label{eq:mj-original-reward}
\end{align}
with weights $r_{m,1},r_{m,2},r_{m,3},r_f,r_g,
 r_{\mathrm{ad}},r_a\in\mathbb{R}_{\ge0}$.
The bounded exponential and linear-norm terms penalize the
matching error. The remaining terms penalize parameter magnitude,
augmentation magnitude, and changes in the adaptation rates.
The objective of the RL-based adaptive augmentation is to learn the policy that maximizes the expected cumulative discounted reward, leading to the convergence of $e_{i,m}$. It is noted that the stacked observations introduced in (\ref{eq:rl_stacked_obs}) should help the agent to understand the temporal dynamics of the environment introduced by different perturbations, that are not directly observable.
As such, the training aims to maximize the expected discounted return
\begin{equation}
 J(\pi)=\mathbb E_\pi\!\left[
       \sum_{k=0}^{\infty}\gamma_{\mathrm{RL}}^{\,k}r_k\right],
 \qquad 0\leq\gamma_{\mathrm{RL}}<1.
 \label{eq:mj-discounted-objective}
\end{equation}
Training details and numerical
weight values are detailed in Section~\ref{sec:num_example}.

\section{Safety Filter Design}
\label{sec:safety_filter}
The control design considered so far does not explicitly account for operational flight-envelope constraints and may therefore generate control inputs that drive the UAV toward unsafe regions of the state space. To address this issue, a safety filter of the form (\ref{eq:mj-joint-cbf-qp}) is introduced to modify the nominal control input when necessary to satisfy the considered constraints.


To reduce the conservatism induced by system uncertainties, we employ the disturbance observer (DOB) originally proposed by \cite{chen2000nonlinear} and recently applied in the context of CBFs, e.g., by \cite{dacs2022robust}. For the system (\ref{num:rot_dyanmics_perturbated_underbrace}) with lumped uncertainty $\Delta(x,\delta)=\Delta f_\omega(x) + \Delta g_\omega(x)\delta$ this yields the observer dynamics
\begin{subequations}
\begin{align}
	\dot z
	&
	=
	-\Lambda (
	z
	+
	\Lambda
	\omega
	+
	f_\omega(\omega)
	+
	g_\omega
	\delta
	),\\
	\hat \Delta
	&=
	z + \Lambda \omega
\end{align}
\label{num:DOB_equations}
\end{subequations}
where $\Lambda\succ0$ is a design matrix and $\hat \Delta$ is an estimation of $\Delta(x,\delta)$. Considering $\Delta(x,\delta)=\hat \Delta+\tilde \Delta(x,\delta)$ with $\tilde \Delta(x,\delta) = \Delta(x,\delta)-\hat \Delta$, we can write the system dynamics as
\begin{align}
\dot{\omega}
&=
f_{\omega}(x)
+
g_{\omega}(x)\delta
+
\hat \Delta
+
\tilde \Delta(x,\delta),
\label{num:rot_dyanmics_perturbated_w_observer}
\end{align}
where the only uncertainty is $\tilde \Delta(x,\delta)$ that can be bounded based on Proposition \ref{prop:DOB_error_bound}. In the following we drop the dependency of $\Delta(x,\delta)$ on $x$ and $\delta$ for notational convenience. 
\begin{proposition}
\label{prop:DOB_error_bound}
The disturbance estimation error $\tilde \Delta = \Delta-\hat \Delta$ decays exponentially towards the ultimate bound
\(
\mathcal{B} = \{\tilde \Delta\colon \|\tilde \Delta\|
\le
\frac{\dot \Delta_{\mathrm{max}}}
{\lambda_{\mathrm{min}}(\Lambda)}
\}
\) 
with $\dot \Delta_{\mathrm{max}}$ from (\ref{eq:_bounded_uncertainty_lumped_aircraft}) and an exponential decay rate of $\lambda_{\mathrm{min}}(\Lambda)$.
\end{proposition}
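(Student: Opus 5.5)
The plan is to derive the error dynamics of the observer (\ref{num:DOB_equations}) directly and then bound the resulting linear system with a forced input. First I compute $\dot{\hat\Delta}$. Differentiating $\hat\Delta = z+\Lambda\omega$ and inserting the observer equation together with the true rotational dynamics (\ref{num:rot_dyanmics_perturbated_underbrace}), $\dot\omega = f_\omega + g_\omega\delta + \Delta$, gives
\begin{align*}
\dot{\hat\Delta} &= -\Lambda\big(z+\Lambda\omega+f_\omega+g_\omega\delta\big) + \Lambda\big(f_\omega+g_\omega\delta+\Delta\big) = \Lambda\big(\Delta-\hat\Delta\big).
\end{align*}
Here I use $z+\Lambda\omega=\hat\Delta$, and the $f_\omega$ and $g_\omega\delta$ terms cancel. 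It follows that
\begin{align*}
\dot{\tilde\Delta} = \dot\Delta - \dot{\hat\Delta} = -\Lambda\tilde\Delta + \dot\Delta,
\end{align*}
which is a linear, exponentially stable system driven by the bounded input $\dot\Delta$, with $\|\dot\Delta\|\le\dot\Delta_{\max}$ by (\ref{eq:_bounded_uncertainty_lumped_aircraft}).

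Next I take the Lyapunov function $V=\tfrac12\tilde\Delta^\top\tilde\Delta$. Assuming $\Lambda$ is symmetric positive definite (which is how I read $\Lambda\succ0$), I get
\begin{align*}
\dot V = -\tilde\Delta^\top\Lambda\tilde\Delta + \tilde\Delta^\top\dot\Delta \le -\lambda_{\min}(\Lambda)\|\tilde\Delta\|^2 + \|\tilde\Delta\|\,\dot\Delta_{\max}.
\end{align*}
Writing $\|\tilde\Delta\|=\sqrt{2V}$, for $\tilde\Delta\neq0$ this becomes $\tfrac{d}{dt}\|\tilde\Delta\| \le -\lambda_{\min}(\Lambda)\|\tilde\Delta\| + \dot\Delta_{\max}$. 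The comparison lemma then yields
\begin{align*}
\|\tilde\Delta(t)\| \le e^{-\lambda_{\min}(\Lambda)t}\|\tilde\Delta(0)\| + \frac{\dot\Delta_{\max}}{\lambda_{\min}(\Lambda)}\big(1-e^{-\lambda_{\min}(\Lambda)t}\big).
\end{align*}
This shows exponential convergence, at rate $\lambda_{\min}(\Lambda)$, to the ball $\mathcal B$. The same inequality shows that $\mathcal B$ is forward invariant, since $\dot V<0$ whenever $\|\tilde\Delta\|>\dot\Delta_{\max}/\lambda_{\min}(\Lambda)$.

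The main obstacle is technical rather than structural, and has two parts. The first is the nondifferentiability of $\|\tilde\Delta\|$ at zero. I would avoid it either by working with $V$ and the upper Dini derivative, or by using the variation-of-constants formula $\tilde\Delta(t)=e^{-\Lambda t}\tilde\Delta(0)+\int_0^t e^{-\Lambda(t-s)}\dot\Delta(s)\,ds$ together with $\|e^{-\Lambda t}\|\le e^{-\lambda_{\min}(\Lambda)t}$, which holds for symmetric $\Lambda$. Either route gives the bound above directly. The second part is that $\dot\Delta$ must be well defined along trajectories, since $\Delta$ depends on $\delta$ and $\delta$ is the QP output. Here I would simply invoke the standing assumption (\ref{eq:_bounded_uncertainty_lumped_aircraft}) on $\dot\Delta$, which the paper already states.
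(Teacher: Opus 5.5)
Your proposal is correct and follows the paper's proof essentially step for step. You use the same error dynamics $\dot{\tilde\Delta}=\dot\Delta-\Lambda\tilde\Delta$, the same Lyapunov function $V=\frac12\tilde\Delta^\top\tilde\Delta$, and the same square-root substitution before solving the scalar comparison inequality; the only difference is that you work with $\|\tilde\Delta\|$ where the paper uses $V_s=\sqrt{V}=\|\tilde\Delta\|/\sqrt{2}$. Your explicit computation of $\dot{\hat\Delta}$, the symmetry assumption on $\Lambda$, and your handling of the nondifferentiability at $\tilde\Delta=0$ just make explicit details that the paper leaves implicit.
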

\begin{proof}
With $\tilde \Delta = \Delta-\hat \Delta$ and (\ref{num:DOB_equations}) it follows
\(
	\dot{\tilde \Delta}
	=
	\dot \Delta 	-\Lambda \tilde \Delta.
\)
Considering the Lyapunov function candidate $V=\frac{1}{2}\tilde \Delta^{\top}\tilde \Delta$ and substituting 
\(
\dot{\tilde \Delta}
	=
	\dot \Delta 	-\Lambda \tilde \Delta
\) 
into 
\(
\dot V
	=
	\tilde \Delta^{\top}\dot {\tilde \Delta}
\) after some algebra yields
\begin{align}
	\dot V
	&
	\le
	-
	2
	\lambda_{\mathrm{min}}(\Lambda)V
	+ 
	\sqrt{2}\dot \Delta_{\mathrm{max}}\sqrt{V} .
	\label{num:DOB_prop_result_V}
\end{align}
Applying the substitution $\sqrt{V}=V_s$ with $\dot V = 2\dot V_sV_s$ leads to
\(
	\dot V_s
	\le
	-
	\lambda_{\mathrm{min}}(\Lambda)V_s
	+ 
	\frac{\sqrt{2}\dot \Delta_{\mathrm{max}}}{2}
\)
with the solution
\begin{align}
	V_s(t)
	&
	\le
	e^{-\lambda_{\mathrm{min}}(\Lambda)t}
	V_s(0)
	+
	\frac{\dot \Delta_{\mathrm{max}}}
	{\sqrt{2}\lambda_{\mathrm{min}}(\Lambda)}
(1-e^{-\lambda_{\mathrm{min}}(\Lambda)t}).\notag
\end{align}
Then with $V_s(t)=\frac{1}{\sqrt{2}}\|\tilde\Delta(t)\|$ it follows the claim.
\end{proof}

For the angular rates, we consider the following constraint functions
\begin{align}
    h_i(x) = i_{\mathrm{max}}^2 - (C_i\omega)^2\ge0,
    \quad i\in{p,q,r},
    \label{num:cbf_angeluar_rates_def}
\end{align}
with rate limits $p_{\mathrm{max}}, q_{\mathrm{max}}, r_{\mathrm{max}}\in\mathbb{R}_{>0}$ and \(
C_p
=
\begin{bmatrix}
1 & 0 & 0
\end{bmatrix}
\),
\(
C_q
=
\begin{bmatrix}
0 & 1 & 0
\end{bmatrix}
\),
\(
C_r
=
\begin{bmatrix}
0 & 0 & 1
\end{bmatrix}
\). Then (\ref{num:cbf_angeluar_rates_def}) defines the safe set
\(
    \mathcal{C}_\omega = \{x \in \mathcal{X} : h_p(x) \geq 0 \land h_q(x) \geq 0 \land h_r(x) \geq 0 \}
\).
Consequently, $h_i(x)$ with $i\in{p,q,r}$ is a CBF if
\begin{align}
&
\sup_{\delta\in U}[
\dot h_i(x,\delta)
]
\ge
-\gamma_i h_i(x)
\label{num:CBF_condition_rates}
\end{align}
for all $x\in\mathcal{X}$, where 
\(
\dot h_i(x,\delta)
=
- 2(C_i\omega)C_i
(
f_{\omega}(x)
+
g_{\omega}(x)\delta
+
\hat \Delta
+
\tilde \Delta
)
\)
and
\(
\gamma_i\in\mathbb{R}_{>0}
\)
is a tuning parameter \cite{amesControlBarrierFunctions2019}.
Using the bound 
\(
2|C_i\omega|
(
\frac{\dot \Delta_{\mathrm{max}}}
{\lambda_{\mathrm{min}}(\Lambda)}
+
\epsilon_{\mathrm DOB}
)
\ge
\|2(C_i\omega)C_i
\tilde \Delta\|
\)
from Proposition \ref{prop:DOB_error_bound}, a sufficient condition for (\ref{num:CBF_condition_rates}) to hold is 
\begin{align}
&
\sup_{\delta\in U}[
- 2(C_i\omega)C_i
(
f_{\omega}(x)
+
g_{\omega}(x)\delta
+
\hat \Delta
)
]\ge\notag\\
&
\qquad \qquad \quad  
-\gamma_i h_i(x) 
+
2|C_i\omega|
(
\frac{\dot \Delta_{\mathrm{max}}}
{\lambda_{\mathrm{min}}(\Lambda)}
+
\epsilon_{\mathrm DOB}
),
\label{num:CBF_condition_ub}
\end{align}
where $\epsilon_{\mathrm DOB}\in\mathbb{R}_{\ge0}$ is a tuning parameter to account for the initial transients of the DOB.

For the Euler angles $\phi$ and $\theta$, we consider the following constraint functions
\begin{align}
    h_i(x) = i_{\mathrm{max}}^2 - (C_i\Theta)^2\ge0,
    \quad i\in{\phi,\theta},
    \label{num:cbf_Euler_angles_def}
\end{align}
with limits $\phi_{\mathrm{max}}, \theta_{\mathrm{max}}\in\mathbb{R}_{>0}$ and \(
C_\phi
=
\begin{bmatrix}
1 & 0 & 0
\end{bmatrix}
\),
\(
C_\theta
=
\begin{bmatrix}
0 & 1 & 0
\end{bmatrix}
\). Then (\ref{num:cbf_Euler_angles_def}) defines the safe set
\(
    \mathcal{C}_\Theta = \{x \in \mathcal{X} : h_\phi(x) \geq 0 \land h_\theta(x) \geq 0 \}
\). It follows that $h_i(x)$ with $i\in{\phi,\theta}$ is a HOCBF if
\begin{align}
&
\sup_{\delta\in U}[
\dot \psi_1(x,\delta)
]
\ge
-
\gamma_{i,2}\psi_1(x)
\label{num:HOCBF_condition_Euler}
\end{align}
for all $x\in\mathcal{X}$ with  
\(
\psi_1(x)
=
\dot h_i(x)
+
\gamma_{i,1}h_i(x)
\)
and
\(
\dot \psi_1(x,\delta)
=
\ddot h_i(x,\delta)
+
\gamma_{i,1}\dot h_i(x)
\),
where
\(
\dot h_i(x)
=
- 2(C_i\Theta)C_i\dot\Theta  
\),
\(
\ddot h_i(x,\delta)
=
- 2(C_i\Theta)C_i\Psi
(f_{\omega}(x)
+
g_{\omega}(x)\delta
+
\hat \Delta
+
\tilde \Delta
)
+
c_i(x)
\),
\(
c_i(x)
=
- 2(C_i\dot\Theta)^2
- 2(C_i\Theta)C_i\dot\Psi\omega
\)
and with tuning parameters $\gamma_{i,1},\gamma_{i,2}\in\mathbb{R}_{>0}$ \cite{xiaoControlBarrierFunctions2019}.
Using the bound 
\(
2|C_i\Theta|\|C_i\Psi\|
(
\frac{\dot \Delta_{\mathrm{max}}}
{\lambda_{\mathrm{min}}(\Lambda)}
+
\epsilon_{\mathrm DOB}
)
\ge
\|2(C_i\Theta)C_i\Psi
\tilde \Delta\|
\)
from Proposition \ref{prop:DOB_error_bound}, a sufficient condition for (\ref{num:HOCBF_condition_Euler}) to be satisfied is
\begin{align}
&
\sup_{\delta\in U}[
- 2(C_i\Theta)C_i\Psi
(f_{\omega}(x)
+
g_{\omega}(x)\delta
+
\hat \Delta
)
]\ge 
-
\gamma_{i,2}\psi_1(x)
\notag\\
&
-
\gamma_{i,1}\dot h_i(x)
-
c_i(x)
+
2|C_i\Theta|\|C_i\Psi\|
(
\frac{\dot \Delta_{\mathrm{max}}}
{\lambda_{\mathrm{min}}(\Lambda)}
+
\epsilon_{\mathrm DOB}
)
\label{num:HOCBF_condition_ub},
\end{align}
where $\epsilon_{\mathrm{DOB}}\in\mathbb{R}_{\ge0}$ again accounts for the initial DOB transients.

To satisfy the rate and angle constraints with minimum invasive control action we formulate the QP
\begin{align}
    &\delta(x)=\arg \min_{\delta\in\mathcal{U}}
    \frac{1}{2}
    \|\delta - \delta_\mathrm{nom}(x)\|\\
    & \text{s.t.} \qquad
    A_\mathrm{CBF}(x)\delta\le b_\mathrm{CBF}(x),\notag
\end{align}
where $A_\mathrm{CBF}(x)\in\mathbb{R}^{5\times 3}$ and $b_\mathrm{CBF}(x)\in\mathbb{R}^{5}$ are obtained by rearranging (\ref{num:CBF_condition_ub}) and (\ref{num:HOCBF_condition_ub}) accordingly.

\section{Numerical Example}
\label{sec:num_example}
The RL-based adaptive augmentation combined with the safety filter is evaluated for a representative tracking scenario, which was obtained after training PPO from Stable Baselines 3 for $130M$ steps using standard hyper-parameters.
The outer-loop guidance law generates the reference commands for $\chi_c$, $h_c$, and $v_{\mathrm{TAS, c}}$, with command changes applied at prescribed switching times. The reference commands are given by $\chi_c \in \{(\SI{0.0}{\second}, \SI{0.0}{\degree}), (\SI{30.0}{\second}, \SI{100.0}{\degree}), (\SI{60.0}{\second},\SI{0.0}{\degree})\}$, $h_c \in \{(\SI{0.0}{\second}, \SI{500.0}{\meter}), (\SI{35.0}{\second}, \SI{750.0}{\meter})\}$, and $v_{\mathrm{TAS, c}} \in \{(\SI{0.0}{\second}, \SI{30.0}{\meter\per\second}), (\SI{50.0}{\second}, \SI{40.0}{\meter\per\second})\}$.
The simulation parameters are listed in Table \ref{tab:simulation_params}.
The comparison of resulting closed-loop responses are shown in Fig. \ref{fig:num_example}.
Fig. \ref{fig:attitude_ndi} shows the closed-loop response of the baseline NDI controller without the safety filter. The controller does provide satisfactory tracking performance for $\phi$ and $r$, however, does not provide accurate tracking of the generated $\theta_r$ reference, and drives $\phi$ beyond its prescribed attitude constraint, illustrating the need for an adaptive augmentation to compensate the matched uncertainties in the closed-loop error dynamics to achieve accurate reference tracking, and a safety filter for flight envelope constraint satisfaction.
The RL-based adaptive augmentation with safety filter and PCH is shown in Fig.  \ref{fig:attitude_rl_based_adaptive_augmentation}, illustrating accurate tracking of $\phi_r$, $\theta_r$, and $r_r$ when flight envelope constraints are not violated. At approximately $t = \SI{35}{\second}$ and $t = \SI{65}{\second}$, the outer-loop reference commands are not safe, i.e. would drive the aircraft beyond its prescribed attitude constraints for $\phi$ and $\theta$, and the safety filter intervenes, prioritizing flight envelope protection over accurate reference tracking.
In these intervals, where the reference commands or the RL-based adaptive augmentation would lead to flight envelope constraint violations, the safety filter modifies the nominal control input $\delta_{\mathrm{nom}}$ by the safe input $\delta$ to satisfy flight envelope constraints.
This is shown in Fig. \ref{fig:control_surfaces_e_norm_adaptive_parameters_pch}, which shows the nominal control input $\delta_{\mathrm{nom}}$, the safe control input $\delta$, the deviation between nominal control input and safe control input $\delta_{\mathrm{nom}} - \delta$, the model-matching error norm $\lVert e_m \rVert$, and the adaptive parameters from the RL-based adaptive augmentation. Note that $\delta = \delta_{\mathrm{nom}}$ if no flight envelope constraints are violated. This illustrates that the safety filter only intervenes when flight envelope constraints are violated, thereby modifying the RL-based adaptive augmentation in an minimal invasive way, while PCH ensures that the RL-based adaptive augmentation does not compensate for interventions introduced by the safety filter, making the model-matching error dynamics invariant of the input discrepancy introduced by the safety filter. Fig. \ref{fig:control_surfaces_e_norm_adaptive_parameters_without_pch} illustrates the effects of omitting PCH. Since the safety filter induced control modifications are not reflected in the reference model, the model-matching error dynamics are no longer invariant from safety filter input discrepancy, resulting in an increased RL-based adaptive augmentation, causing the adaptive parameters to increase until they reach the projection boundary, potentially leading to degraded closed-loop stability.

\begin{figure*}[t]
    \centering

    \begin{subfigure}[t]{0.48\textwidth}
        \centering
        \includegraphics[width=\linewidth]{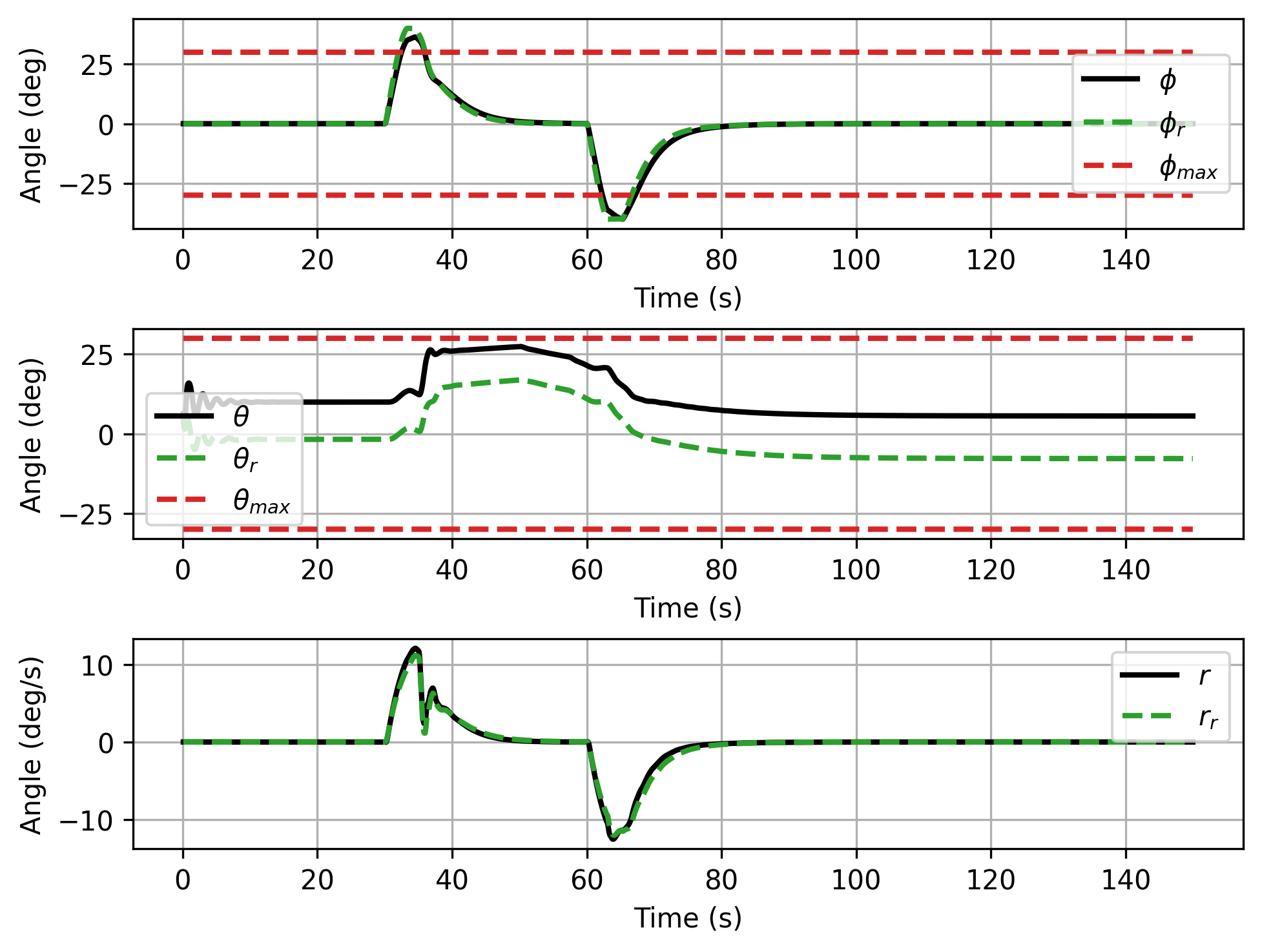}
        \caption{Outer-loop reference commands and resulting closed-loop response from baseline NDI controller}
        \label{fig:attitude_ndi}
    \end{subfigure}
    \hfill
    \begin{subfigure}[t]{0.48\textwidth}
        \centering
        \includegraphics[width=\linewidth]{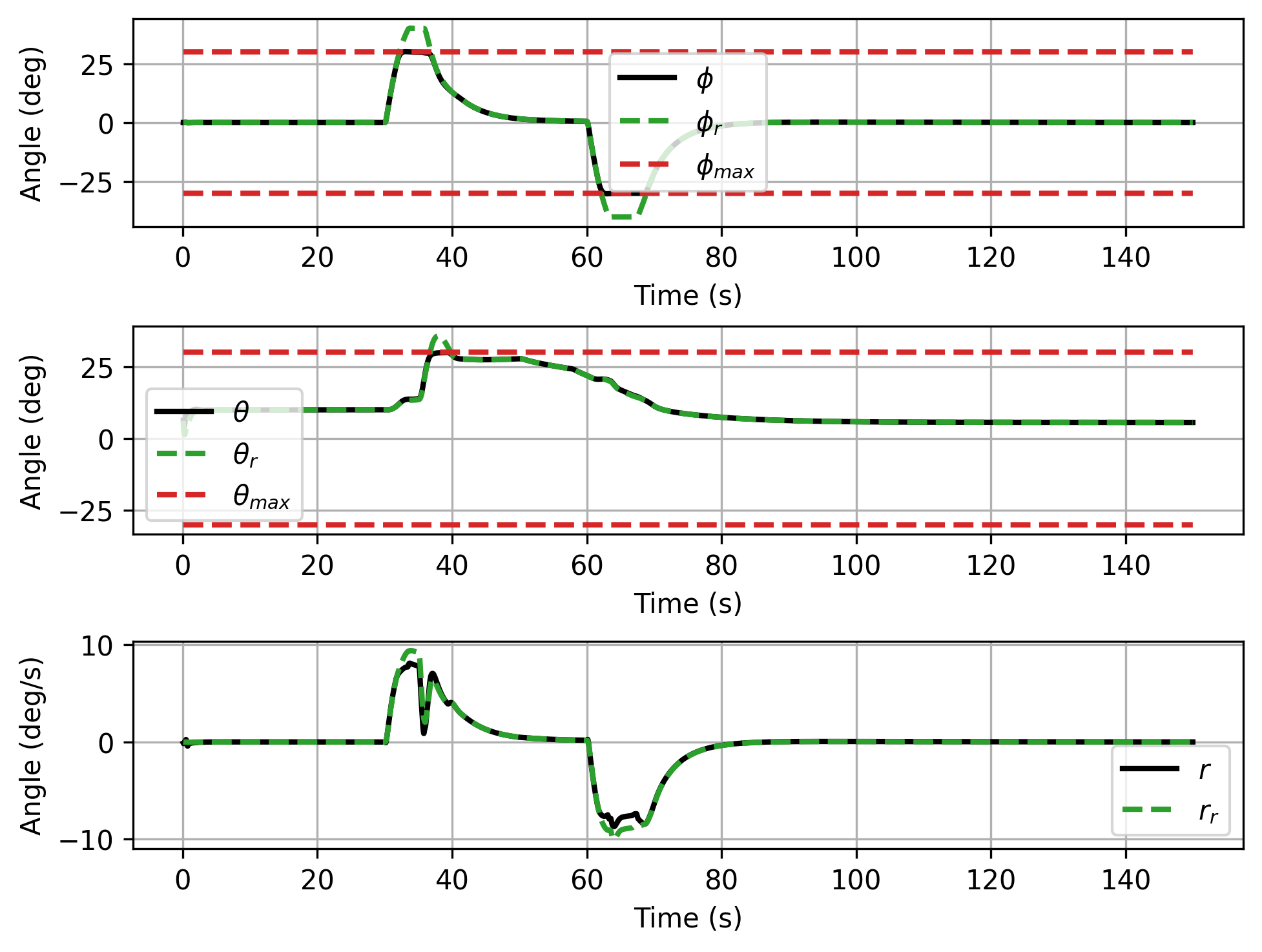}
        \caption{Outer-loop reference commands and resulting closed-loop response from RL-based adaptive augmentation with safety filter and PCH}
        \label{fig:attitude_rl_based_adaptive_augmentation}
    \end{subfigure}

    \vspace{0.5em}

     \begin{subfigure}[t]{0.48\textwidth}
        \centering
        \includegraphics[width=\linewidth]{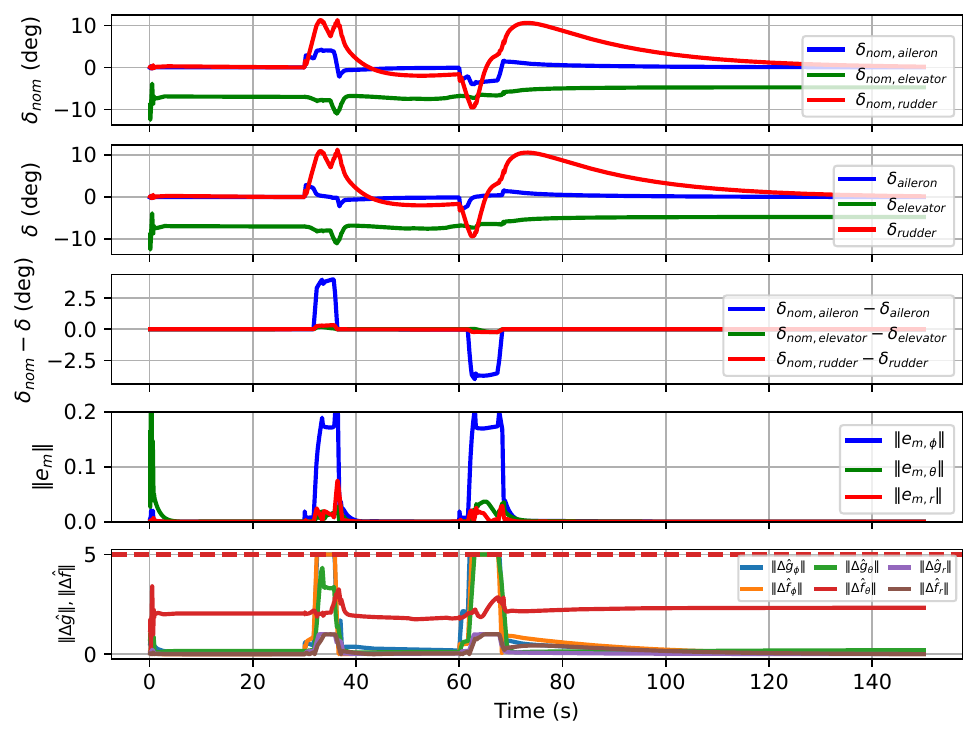}
        \caption{Nominal and safety-filtered control inputs, matching error norm and adaptive parameters from RL-based adaptive augmentation without PCH}
        \label{fig:control_surfaces_e_norm_adaptive_parameters_without_pch}
    \end{subfigure}
    \hfill
    \begin{subfigure}[t]{0.48\textwidth}
        \centering
        \includegraphics[width=\linewidth]{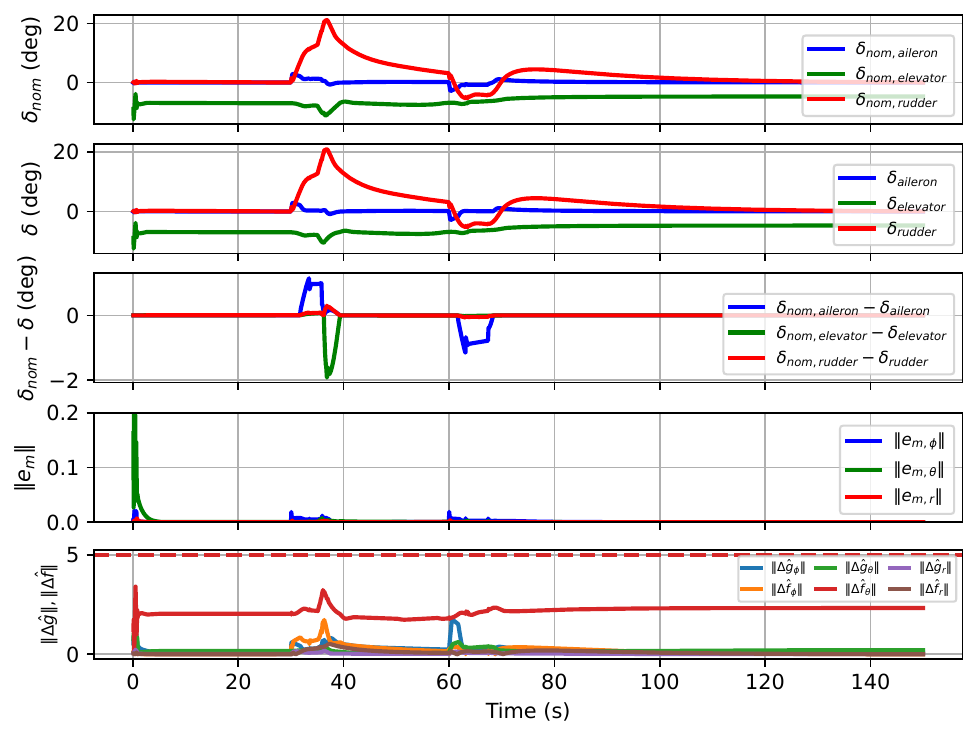}
        \caption{Nominal and safety-filtered control inputs, matching error norm and adaptive parameters from RL-based adaptive augmentation with PCH}
        \label{fig:control_surfaces_e_norm_adaptive_parameters_pch}
    \end{subfigure}
    \caption{Closed-loop comparison of NDI and safety-filtered RL-based adaptive control with and without PCH}
    \label{fig:num_example}
\end{figure*}

\begin{table}[H]
    \centering
    \caption{Simulation parameters}
    \begin{tabular}{lclc}
        \toprule
        Parameter & Value & Parameter & Value \\
        \midrule

        \multicolumn{4}{c}{\textit{Simulation parameters}} \\
        \midrule
        $p_{\max}$        & \SI{100.0}{\degree\per\second} &
        $q_{\max}$        & \SI{60.0}{\degree\per\second} \\

        $r_{\max}$        & \SI{50.0}{\degree\per\second} &
        $\phi_{\max}$     & \SI{30.0}{\degree} \\

        $\theta_{\max}$   & \SI{30.0}{\degree} &
        $\gamma_{p,q,r}$  & 5.0 \\

        $\gamma_{\phi, \theta, 1}$        & 3.0 &
        $\gamma_{\phi, \theta, 2}$        & 5.0 \\

        $\delta_{\max}$   &
        $[\SI{25.0}{\degree},
          \SI{30.0}{\degree},
          \SI{90.0}{\degree}]$ &
        $\Delta t$        & \SI{0.02}{\second} \\
        $N_{steps} $ & \SI{7500}{} \\
        \midrule
        \multicolumn{4}{c}{\textit{Reward function parameters}} \\
        \midrule
        $r_{m,1}$ & 10.0 &
        $r_{m,2}$ & 0.5 \\

        $r_{m,3}$ & 10.0 &
        $r_d$     & 0.15 \\

        $r_C$     & 0.15 &
        $r_{ad}$  & 0.05 \\

        $r_c$     & 0.001 &
        & \\

        \bottomrule
    \end{tabular}
    \label{tab:simulation_params}
\end{table}


\section{Conclusions}
\label{sec:conclusion}

This paper presents a safe RL-based adaptive augmentation control scheme, in which RL is used to learn an adaptation law for compensating the effects of matched uncertainties. A safety filter is employed to keep the states within a predefined safe set, while pseudo control hedging is formulated to avoid interactions between the RL augmentation and the safety filter. The proposed concept is applied to fixed-wing UAV attitude control under uncertainty. The results demonstrate effective uncertainty compensation and successful avoidance of adverse interactions with the safety filter while maintaining the prescribed flight-envelope constraints.








\bibliographystyle{IEEEtran}
\bibliography{mybibfile}

@inproceedings{dacs2022robust,
  title={Robust safe control synthesis with disturbance observer-based control barrier functions},
  author={Da{\c{s}}, Ersin and Murray, Richard M},
  booktitle={2022 IEEE 61st conference on decision and control (CDC)},
  pages={5566--5573},
  year={2022},
  organization={IEEE}
}

@inproceedings{zhao2023stable,
  title={Stable and safe reinforcement learning via a barrier-lyapunov actor-critic approach},
  author={Zhao, Liqun and Gatsis, Konstantinos and Papachristodoulou, Antonis},
  booktitle={2023 62nd IEEE Conference on Decision and Control (CDC)},
  pages={1320--1325},
  year={2023},
  organization={IEEE}
}

@article{marvi2021safe,
  title={Safe reinforcement learning: A control barrier function optimization approach},
  author={Marvi, Zahra and Kiumarsi, Bahare},
  journal={International Journal of Robust and Nonlinear Control},
  volume={31},
  number={6},
  pages={1923--1940},
  year={2021},
  publisher={Wiley Online Library}
}

@article{Markgraf2026,
  title = {Safe Reinforcement Learning using Action Projection: Safeguard the Policy or the Environment?},
  author = {Markgraf, Hannah and Sawant, Shambhuraj and Krasowski, Hanna and Schäfer, Lukas and Gros, Sebastien and Althoff, Matthias},
  year = {2026},
  journal = {Transactions on Machine Learning Research},
}

@article{emam2022safe,
  title={Safe reinforcement learning using robust control barrier functions},
  author={Emam, Yousef and Notomista, Gennaro and Glotfelter, Paul and Kira, Zsolt and Egerstedt, Magnus},
  journal={IEEE Robotics and Automation Letters},
  volume={10},
  number={3},
  pages={2886--2893},
  year={2022},
  publisher={IEEE}
}

@inproceedings{cheng2019end,
  title={End-to-end safe reinforcement learning through barrier functions for safety-critical continuous control tasks},
  author={Cheng, Richard and Orosz, G{\'a}bor and Murray, Richard M and Burdick, Joel W},
  booktitle={Proceedings of the AAAI conference on artificial intelligence},
  volume={33},
  number={01},
  pages={3387--3395},
  year={2019}
}

@article{goel2024composite,
  title={Composite adaptive control for time-varying systems with dual adaptation},
  author={Goel, Raghavv and Roy, Sayan Basu},
  journal={IEEE Transactions on Automatic Control},
  volume={70},
  number={1},
  pages={487--494},
  year={2024},
  publisher={IEEE}
}

@INPROCEEDINGS{KannanACCMRACRL,
  author={Kannan, Harinee and Patnaik, Karishma and Zhang, Wenlong},
  booktitle={2025 American Control Conference (ACC)}, 
  title={Reinforcement Learning-based Hover Control of a Quadrotor with Model Reference Adaptation}, 
  year={2025},
  volume={},
  number={},
  pages={438-443},
  doi={10.23919/ACC63710.2025.11107552}}

@INPROCEEDINGS{Guha2021MRACRL,
  author={Guha, Anubhav and Annaswamy, Anuradha M.},
  booktitle={2021 60th IEEE Conference on Decision and Control (CDC)}, 
  title={Online Policies for Real-Time Control Using MRAC-RL}, 
  year={2021},
  volume={},
  number={},
  pages={1808-1813},
  doi={10.1109/CDC45484.2021.9683641}}

@ARTICLE{Borghesi2026MRARL,
  author={Borghesi, Marco and Bosso, Alessandro and Notarstefano, Giuseppe},
  journal={IEEE Transactions on Automatic Control}, 
  title={{MR-ARL}: Model Reference Adaptive Reinforcement Learning for Robustly Stable On-Policy Data-Driven {LQR}}, 
  year={2026},
  volume={71},
  number={2},
  pages={1129-1144},
  doi={10.1109/TAC.2025.3611155}}

@ARTICLE{Annaswamy2023RLACTAC,
  author={Annaswamy, Anuradha M. and Guha, Anubhav and Cui, Yingnan and Tang, Sunbochen and Fisher, Peter A. and Gaudio, Joseph E.},
  journal={IEEE Transactions on Automatic Control}, 
  title={Integration of Adaptive Control and Reinforcement Learning for Real-Time Control and Learning}, 
  year={2023},
  volume={68},
  number={12},
  pages={7740-7755},
  doi={10.1109/TAC.2023.3290037}}

@ARTICLE{Cheng2022L1RL,
  author={Cheng, Yikun and Zhao, Pan and Wang, Fanxin and Block, Daniel J. and Hovakimyan, Naira},
  journal={IEEE Robotics and Automation Letters}, 
  title={Improving the Robustness of Reinforcement Learning Policies With ${\mathcal {L}_{1}}$ Adaptive Control}, 
  year={2022},
  volume={7},
  number={3},
  pages={6574-6581},
  doi={10.1109/LRA.2022.3169309}}

@article{margolis2024rapid,
  title={Rapid locomotion via reinforcement learning},
  author={Margolis, Gabriel B and Yang, Ge and Paigwar, Kartik and Chen, Tao and Agrawal, Pulkit},
  journal={The International Journal of Robotics Research},
  volume={43},
  number={4},
  pages={572--587},
  year={2024},
  publisher={SAGE Publications Sage UK: London, England}
}

@inproceedings{chen2022understanding,
  title     = {Understanding Domain Randomization for Sim-to-Real Transfer},
  author    = {Chen, Xiaoyu and Hu, Jiachen and Jin, Chi and Li, Lihong and Wang, Liwei},
  booktitle = {International Conference on Learning Representations (ICLR)},
  year      = {2022}
}

@article{wada2022sim,
  title={Sim-to-real transfer for fixed-wing uncrewed aerial vehicle: pitch control by high-fidelity modelling and domain randomization},
  author={Wada, Daichi and Araujo-Estrada, Sergio and Windsor, Shane},
  journal={IEEE Robotics and Automation Letters},
  volume={7},
  number={4},
  pages={11735--11742},
  year={2022},
  publisher={IEEE}
}

@inproceedings{konatala2021reinforcement,
  title={Reinforcement learning based online adaptive flight control for the {Cessna} {Citation} {II} ({PH-LAB}) aircraft},
  author={Konatala, Ramesh and Van Kampen, Erik-Jan and Looye, Gertjan},
  booktitle={AIAA Scitech 2021 Forum},
  pages={0883},
  year={2021}
}

@article{konatala2024flight,
  title={Flight testing reinforcement-learning-based online adaptive flight control laws on {CS-25-Class} aircraft},
  author={Konatala, Ramesh and Milz, Daniel and Weiser, Christian and Looye, Gertjan and van Kampen, E},
  journal={Journal of Guidance, Control, and Dynamics},
  volume={47},
  number={11},
  pages={2460--2467},
  year={2024},
  publisher={American Institute of Aeronautics and Astronautics}
}

@inproceedings{dias2019intelligent,
  title={Intelligent nonlinear adaptive flight control using incremental approximate dynamic programming},
  author={Dias, Pedro Miguel and Zhou, Ye and Van Kampen, Erik-Jan},
  booktitle={AIAA Scitech 2019 Forum},
  pages={2339},
  year={2019}
}

@article{bohn2023data,
  title={Data-efficient deep reinforcement learning for attitude control of fixed-wing {UAVs}: Field experiments},
  author={B{\o}hn, Eivind and Coates, Erlend M and Reinhardt, Dirk and Johansen, Tor Arne},
  journal={IEEE Transactions on Neural Networks and Learning Systems},
  volume={35},
  number={3},
  pages={3168--3180},
  year={2023},
  publisher={IEEE}
}

@inproceedings{dally2022soft,
  title={Soft actor-critic deep reinforcement learning for fault tolerant flight control},
  author={Dally, Killian and Van Kampen, Erik-Jan},
  booktitle={AIAA SciTech 2022 Forum},
  pages={2078},
  year={2022}
}

@article{de2023deep,
  title={A deep reinforcement learning control approach for high-performance aircraft},
  author={De Marco, Agostino and D’Onza, Paolo Maria and Manfredi, Sabato},
  journal={Nonlinear Dynamics},
  volume={111},
  number={18},
  pages={17037--17077},
  year={2023},
  publisher={Springer}
}

@inproceedings{chowdhury2024unified,
  title={A unified inner-outer loop reinforcement learning flight controller for fixed-wing aircraft},
  author={Chowdhury, Mozammal and Keshmiri, Shawn},
  booktitle={2024 International Conference on Unmanned Aircraft Systems (ICUAS)},
  pages={556--563},
  year={2024},
  organization={IEEE}
}

@inproceedings{bohn2019deep,
  title={Deep reinforcement learning attitude control of fixed-wing {UAVs} using proximal policy optimization},
  author={B{\o}hn, Eivind and Coates, Erlend M and Moe, Signe and Johansen, Tor Ame},
  booktitle={2019 international conference on unmanned aircraft systems (ICUAS)},
  pages={523--533},
  year={2019},
  organization={IEEE}
}

@article{chowdhury2024interchangeable,
  title={Interchangeable reinforcement-learning flight controller for fixed-wing {UASs}},
  author={Chowdhury, Mozammal and Keshmiri, Shawn},
  journal={IEEE Transactions on Aerospace and Electronic Systems},
  volume={60},
  number={2},
  pages={2305--2318},
  year={2024},
  publisher={IEEE}
}

@article{shukla2024reinforcement,
  title={Reinforcement learning-based evolving flight controller for fixed-wing uncrewed aircraft},
  author={Shukla, Daksh and Benyamen, Hady and Keshmiri, Shawn and Beckage, Nicole M},
  journal={IEEE Transactions on Control Systems Technology},
  volume={33},
  number={3},
  pages={872--886},
  year={2024},
  publisher={IEEE}
}

@inproceedings{marquis2026adversarial,
  title={Adversarial reinforcement learning for robust control of fixed-wing aircraft under model uncertainty},
  author={Marquis, Dennis J and Wilhelm, Blake and Muniraj, Devaprakash and Farhood, Mazen},
  booktitle={2026 American Control Conference (ACC)},
  pages={1158--1164},
  year={2026},
  organization={IEEE}
}

@article{richter2024review,
  title={A review of reinforcement learning for fixed-wing aircraft control tasks},
  author={Richter, David J and Calix, Ricardo A and Kim, Kyungbaek},
  journal={IEEE Access},
  volume={12},
  pages={103026--103048},
  year={2024},
  publisher={IEEE}
}

@inproceedings{johnson2000pseudo,
	title={Pseudo-control hedging: A new method for adaptive control},
	author={Johnson, Eric N and Calise, Anthony J},
	booktitle={Advances in navigation guidance and control technology workshop},
	pages={1--2},
	year={2000},
	organization={Alabama, USA Alabama, USA}
}

@article{chen2000nonlinear,
	title={A nonlinear disturbance observer for robotic manipulators},
	author={Chen, Wen-Hua and Ballance, Donald J and Gawthrop, Peter J and O'Reilly, John},
	journal={IEEE Transactions on industrial Electronics},
	volume={47},
	number={4},
	pages={932--938},
	year={2000},
	publisher={IEEE}
}

@article{eugene2013robust,
	title={Robust and adaptive control with aerospace applications},
	author={Eugene, Lavretsky and Kevin, Wise and Howe, D},
	journal={England: Springer-Verlag London},
	year={2013}
}

@inproceedings{amesControlBarrierFunctions2019,
  title = {Control {{Barrier Functions}}: {{Theory}} and {{Applications}}},
  shorttitle = {Control {{Barrier Functions}}},
  booktitle = {2019 18th {{European Control Conference}} ({{ECC}})},
  author = {Ames, Aaron D. and Coogan, Samuel and Egerstedt, Magnus and Notomista, Gennaro and Sreenath, Koushil and Tabuada, Paulo},
  date = {2019-06},
  pages = {3420--3431},
  publisher = {IEEE},
  location = {Naples, Italy},
  doi = {10.23919/ECC.2019.8796030},
  url = {https://ieeexplore.ieee.org/document/8796030/},
  urldate = {2026-08-13},
  eventtitle = {2019 18th {{European Control Conference}} ({{ECC}})},
  isbn = {978-3-907144-00-8},
  langid = {english}
}

@article{KAELBLING199899,
  title = {Planning and Acting in Partially Observable Stochastic Domains},
  author = {Kaelbling, Leslie Pack and Littman, Michael L. and Cassandra, Anthony R.},
  year = 1998,
  journal = {Artificial Intelligence},
  volume = {101},
  number = {1},
  pages = {99--134},
  issn = {0004-3702},
  doi = {10.1016/S0004-3702(98)00023-X}
}

@inproceedings{xiaoControlBarrierFunctions2019,
  title = {Control {{Barrier Functions}} for {{Systems}} with {{High Relative Degree}}},
  booktitle = {2019 {{IEEE}} 58th {{Conference}} on {{Decision}} and {{Control}} ({{CDC}})},
  author = {Xiao, Wei and Belta, Calin},
  year = 2019,
  month = dec,
  pages = {474--479},
  publisher = {IEEE},
  address = {Nice, France},
  doi = {10.1109/CDC40024.2019.9029455},
  urldate = {2026-08-26},
  copyright = {https://ieeexplore.ieee.org/Xplorehelp/downloads/license-information/IEEE.html},
  isbn = {978-1-7281-1398-2},
  langid = {english}
}

\end{document}